\pgfplotsset{compat=1.16}
\newcommand{\cA}{{\cal A}} 
\newcommand{\cC}{{\cal C}} 
\newcommand{\cD}{{\cal D}}
\newcommand{\cP}{{\cal P}}
\newcommand{\cS}{{\cal S}}
\newcommand{\cV}{{\cal V}}
\newcommand{\cZ}{{\cal Z}}
\DeclareMathAlphabet{\mathbfsl}{OT1}{ppl}{b}{it} 
\newcommand{\bc}{\mathbfsl{c}} 
\newcommand{\bu}{\mathbfsl{u}}
\newcommand{\be}[1]{\begin{equation}\label{#1}}
\newcommand{\ee}{\end{equation}} 
\newcommand{\eq}[1]{(\ref{#1})}
\renewcommand{\leq}{\leqslant}
\renewcommand{\geq}{\geqslant}
\newcommand{\script}[1]{{\mathscr #1}}
\renewcommand{\Bbb}{\mathbb}
\newcommand{\C}{{\Bbb C}} 
\newcommand{\N}{{\Bbb N}}
\newcommand{\R}{{\Bbb R}} 
\newcommand{\F}{{\Bbb F}}
\newcommand{\Lf}{{\Bbb L}}
\newcommand{\Tref}[1]{Theo\-rem\,\ref{#1}}
\renewcommand{\Cref}[1]{Co\-ro\-lla\-ry\,\ref{#1}}
\newtheorem{thm}{Theorem\hspace{-1pt}} 
\newenvironment{theorem}
{\begin{thm}\hspace*{-1ex}{\bf.}}{\end{thm}}
\newtheorem{prop}[thm]{Proposition$\!$}
\newcommand{\deff}{\mbox{$\stackrel{\rm def}{=}$}}
\newcommand{\Span}[1]{{\left\langle {#1} \right\rangle}}
\newcommand{\sM}{\script{M}}
\newcommand{\ds}{d^{(s)}}
\theoremstyle{plain} 
\newtheorem{lem}[thm]{Lemma\hspace{-.75pt}}
\newtheorem{cor}[thm]{Corollary$\!$}
\newenvironment{corollary}{\begin{cor}\hspace*{-1ex}{\bf.}}{\end{cor}}
\newtheorem{defn}{Definition$\!$}
\newenvironment{definition}{\begin{defn}\hspace*{-1ex}{\bf.}}{\end{defn}}
\newcommand{\rr}[1]{\textcolor[rgb]{0,0,0}{#1}}
\newcommand{\rrr}[1]{\textcolor[rgb]{0,0,0}{#1}}
\newcommand{\bl}[1]{\textcolor[rgb]{0,0,0}{#1}}
\begin{document}
\title{Subspace Coding for Spatial Sensing} 



\author{%
  \IEEEauthorblockN{Hessam Mahdavifar}
  \IEEEauthorblockA{Department of Electrical and Computer Engineering\\
                    Northeastern University,
                    Boston, MA, USA
                    }
  \and
  \IEEEauthorblockN{Robin Rajam\"{a}ki and Piya Pal}
  \IEEEauthorblockA{Department of Electrical and Computer Engineering \\
                    University of California San Diego, 
                    La Jolla, CA, USA
                    }                   
                    \thanks{This work was supported in part by NSF under Grant CCF-2415440, the Center for Ubiquitous Connectivity (CUbiC) under the JUMP 2.0 program as well as by NSF 2124929, ONR N00014-19-1-2256, and DE-SC0022165.}
                   
}

\maketitle

\begin{abstract}
A subspace code is defined as a collection of subspaces of an ambient vector space, where each \rr{information-encoding} codeword is a subspace. 
\rr{T}his paper \rr{studies} a class of spatial sensing problems, notably direction of arrival (DoA) estimation  \rr{using multisensor arrays}, \rr{from} a novel subspace coding \rr{perspective}. Specifically, we demonstrate how \rr{a canonical (passive) sensing} model can be mapped into a subspace coding problem, with the sensing operation defining a unique structure for the subspace codewords. We introduce the concept of \textit{sensing subspace codes} 
follow\rr{ing} this structure, \rr{and show} how the\rr{se} codes can be controlled by \rr{judiciously designing the sensor array geometry.} 
We further present a construction of \rr{sensing subspace codes} leveraging a certain class of Golomb rulers \rr{that achieve near-optimal minimum codeword distance. These designs inspire novel noise-robust sparse array geometries achieving high angular resolution. We also prove that codes corresponding to conventional uniform linear arrays are suboptimal in this regard.} This work is the first to establish connections between subspace coding and spatial sensing, \rr{with the aim of} leverag\rr{ing} insights and methodologies \rr{in one field} to tackle challenging problems \rr{in the other}.
\end{abstract}

\section{Introduction}
\label{sec:one}
A subspace code associated with \rr{an} ambient vector space $\cV$ is \rr{defined as} a subset of the set of all subspaces of $\cV$. Subspace codes \rr{find applications} in non-coherent communications where the medium only preserves the subspace spanned by the input vectors into the system rather than the individual symbol entries of the vectors. The notion of subspace codes and utilizing them for non-coherent communication was first introduced in the seminal work by Koetter and Kschischang \cite{KK} in the context of randomized network coding, where subspace codes were employed over finite fields (i.e., $\cV = \F_q^M$, for some power of prime $q$). 
Recently, this framework \rr{was} extended to the field of real/complex numbers ($\cV = \R^M$ or $\cV = \C^M$), \rr{and} \textit{analog} subspace codes \rr{were demonstrated to} enable reliable communication over wireless networks in a non-coherent fashion \cite{soleymani2022analog}. In particular, it \rr{was} shown that the subspace error and erasure-correction capability of an analog subspace code is characterized by the \emph{minimum subspace distance}\rr{---}a variation of the chordal distance \cite{soleymani2022analog}\rr{---and} new algebraic constructions for complex subspace codes were introduced \cite{soleymani2022analog,soleymani2021new}. 
Subspace codes are also closely related to Grassmann codes \cite{agrawal2001multiple, zheng2002communication}, \rr{where} the dimension of all the subspace codewords are equal. Subspace codes extend Grassmann codes by allowing subspace codewords of different dimensions to be included in the code. 

\rr{Parallel to these developments, the classical topic of \emph{sensing} has recently experienced renewed research interest, in large part due to increasingly closer integration of sensing and communications functionalities with the goal of
efficiently utilizing spatio-temporal resources and
attaining improved spatial selectivity, resolution, identifiability, and signal-to-noise-ratio (SNR)\cite{liu2020joint,ma2020joint,sun2020mimoradar,ahmadipour2022aninformation}. 
Fully harnessing these advantages requires judiciously designing the sensor array geometry, 
which has led to a great interest in \emph{sparse arrays}. 
They offer many benefits over conventional uniform arrays, such as the ability to identify more signal sources / targets than sensors, as well as super-resolution capabilities and robustness to noise \cite{liu2017cramerrao,wang2017coarrays,sarangi2023superresolution}. These sensing tasks involve estimating parameters, such as directions-of-arrival (DoAs) of emitters / targets, {\em encoded in a subspace} of the received signal. This hints at a deep underlying connection between subspace codes and sensing beckoning to be explored.}

\rr{
The purpose of this paper is to establish first contact between the two fields of subspace coding and sensing, 
showing that they actually 
share many fundamental
objectives despite having so far evolved independently. 
Herein, we take preliminary steps towards identifying core mathematical problems of mutual interest and providing solutions to a subset of them. Firstly, we establish that the array geometry (especially of sparse arrays) plays an important role in the design of \emph{structured} subspace codes, inspiring novel constructions that achieve optimal minimum subspace distance. 
A key discovery is that the difference set \bl{(e.g., see \cite{tao2006additive})}---a key mathematical object in the study of sparse arrays, arising from the sensing model---also emerges naturally from the minimum distance of subspace codes. 
Secondly, we show that the classical problem of DoA estimation using a multisensor array} can be mapped into a subspace coding problem. \rr{Interestingly,} the sensing \rr{model} imposes a unique structure on \rr{the} subspace codewords, 
\rr{which can be controlled 
by designing the array geometry.} 
This leads to the new concept of \textit{sensing subspace codes} that can be also realized in practice by spatial sampling strategies such as antenna placement and selection. To showcase the applicability of these codes \rr{and the novel insight into DoA estimation and array design yielded by them, we consider designing} a one-dimensional sensing subspace code in $\C^M$ \rr{for} \rr{a} single \rr{source / target model. We propose a code construction} leverag\rr{ing} a special type of Golomb rulers \rr{with desirable modular arithmetic properties that achieve
near-optimal minimum subspace distance. This 
gives rise to novel 
noise-robust sparse array 
designs. 
Conversely, we show that codes based on uniform sampling---corresponding to conventional uniform linear arrays---are strictly suboptimal in this regard.} 
\rr{W}e discuss several directions for future \rr{work} with the aim to motivate further research \rr{in this hitherto unexplored} intersection of coding theory and sensing.

\textbf{Notation:}
For $N\!\in\!\N$, let $[N]$ denote $\{1,2,\dots,N\}$. Also, let $\cZ(N)\bl{\,\deff\,\{e^{j2\pi n/N}, n\!\in\![N]\!-\!1\}}$ denote the set of $N$-th roots of unity. 
The column space of a matrix $X$ is denoted by $\Span{X}$. For a vector space $\cV$ let $\cP(\cV)$ denote the set of all subspaces of $\cV$. Also, 
$\cV\!=\!\Lf^M$, where $\Lf$ can be either $\R$ or $\C$. The set of all $r$-dimensional subspaces of $\Lf^M$ is denoted by $G_{r,M}(\Lf)$, which is referred to as Grassmann space. Given $\Lf\!=\!\C$, $G_{r,M}(\C)$ can be also described as 
$G_{r,M}(\C)\,\deff\,\{\Span{Z} : Z\!\in\!\C^{M\times r}, Z^{\text{H}}Z\!=\!I_r\},$ where $I_r$ is the $r\times r$ identity matrix. The elements of $G_{r,M}(\Lf)$ are also referred to as $r$-planes. 

\textbf{Subspace coding preliminaries:}
Consider two $r$-planes $U$ and $V$. Let $U_i \in U$ and $V_i \in V$ be column vectors having unit length such that $|V_i^{\text{H}}U_i |$ is maximal, subject to the conditions $U_j^{\text{H}} U_i =0$ and $V_j^{\text{H}} V_i =0$ for all $i,j$ with $i > j \geq 1$. Then the \textit{principal} angle $\beta_i$, for $i \in [r]$, between $U$ and $V$ is defined as $\beta_i=\arccos{|V_i^{\text{H}} U_i|}$, see, e.g., \cite{barg2002bounds,roy1947note}. \rrr{Similarly to} \cite{soleymani2022analog}, we adopt a \emph{quasi-distance} metric between subspaces $U$ and $V$, refer\rrr{red} to as the subspace distance, defined as
\be{ds}
\ds(U,V)\,\deff\, \sum_{i=1}^r \sin^2(\beta_i).
\ee
\rrr{Note that} \eq{ds} is proportional to the square of the chordal distance $d_c$ \cite{conway1996packing,barg2002bounds}. \rrr{The} minimum distance \rrr{of} an analog subspace code $\cC\subset\cP(\cV)$ is defined as \cite{soleymani2022analog}
\be{dsmin-def}
\ds_{\min}(\cC)\,\deff\, \min_{U,V\in \cC, U\neq V} \ds(U,V).
\ee

\section{Subspace Coding Meets Spatial Sensing}
\label{sec:three}

\subsection{Receiver System Model for \rr{Spatial Sensing}}
\label{sec:three-A}

Consider $K$ unknowns $\theta_1, \theta_2, \cdots, \theta_K$, with $\theta_k \in [-\pi/2,\pi/2)$, representing \rr{the} angles of arrival 
\rr{of} $K$ \rr{far field} objects. A discretization is assumed where $\theta_k$'s come from a grid with $N$ points. The grid points can be
arbitrarily chosen as long as they are distinct. Here, we pick $\theta_k$'s from a grid with the $\sin(\bl{\cdot})$ of the grid points equally distanced in $[-1,1)$, i.e., \rrr{$\theta_k\in\{\arcsin(-1+2(n-1)/N)\}_{n=1}^N$.} 
The \rr{(narrowband)} receiver observes $Y = [y_{m,l}]_{M \times L} \in \mathbb{C}^{M\times L}$, where $L\geq K$, with $y_{m,l}$'s, for $m \in [M], l \in [L]$ given as follows:
\be{y-model}
y_{m,l} = \sum_{k=1}^K h_m(\theta_k)x_{k,l}+w_{m,l},
\ee
where $w_{m,l}$'s are additive noise terms, often assumed to be i.i.d. complex Gaussian, and $x_{k,l}$'s are unknown 
\rr{nuisance parameters}, interpreted as the time varying \rr{source signal amplitudes}
at the $l$-th time sample of an unknown source \rr{in} direction $\theta_k$. 
In this sensing model, $M$ represents the number of receiver antennas and 
$L$ represents the number of time samples. 
We consider \rr{a canonical \emph{passive sensing} model,} 
where 
\rr{$h(\theta) = [h_m(\theta)]_{M \times 1}$, $m \in [M]$ denotes the array manifold vector describing the relative phase shifts experienced by the antennas when a plane wave impinges on the array from direction $\theta$. Hence, we have:}\footnote{In \emph{active sensing}, the transmit array geometry and waveforms may impose additional \emph{spatio-temporal} structure on $h_m(\cdot)$. This is a topic of future research.}
\be{h-model}
h_m(\theta) = e^{j \pi d_m \sin\theta},\ \text{for}\ m \in[M].
\ee 
\rr{By \eq{h-model}, $h_m(\theta_k)$ in \eq{y-model} depends on unknown angles $\{\theta_k\}_{k=1}^K$. Nevertheless, we can influence $h_m(\cdot)$ by the choice of the receiver \emph{array geometry}, represented by the set of antenna positions $\{d_m\}_{m=1}^M$ (in units of half a wavelength of the carrier frequency).} 
\rr{For simplicity, we assume the $d_m$'s are integers between $0$ and $N-1$,} 
i.e., $d_m \in [N]\rr{-1}$, for $m \in [M]$.

The goal of the receiver is to identify the $K$-tuple $(\theta_1,\theta_2,\ldots, \theta_K)$ or, equivalently, the $K$-tuple of $\sin(\theta_k)$'s, 
\rr{given}
received signal $Y_{M \times L}$ 
\rr{and array geometry $\{d_m\}_{m=1}^M$}.

\subsection{A Subspace Coding \rr{Perspective of Spatial Sensing}}

The $K$-tuple $(\theta_1,\theta_2,\ldots, \theta_K)$ can be considered as a message $\bu \in \sM$, where $\sM$ is the message alphabet of size $N\choose K$. Furthermore, the mapping $H: \sM \rightarrow \C^{M \times K}$ is the encoder function, where an input message $\bu = (\theta_1,\theta_2,\ldots, \theta_K)$ is encoded into an $M \times K$ complex matrix $H(\bu)$ as follows:
\be{encoder-model}
H(\bu) = [h(\theta_1), h(\theta_2), \ldots, h(\theta_K)]_{M \times K}.
\ee
Note that the subtle difference 
to 
a classical coding problem is that we do not have 
full control over the encoder/code design. 
\rr{Instead,} the encoder is selected from the set of all encoders with $h_m(\bl{\cdot})$ components following \eq{h-model}, \rr{where} we have control over \rr{only} the values of the $d_m$'s. 
In other words, there is a mapping from the set of $M$-tuples $(d_1,d_2,\dots,d_M)$ to the set of possible encoders $H(\bl{\cdot})$. 

Next, note that the equation for the receiver system model, expressed in \eq{y-model}, can be rewritten as follows:
\be{Y-model}
Y_{M \times L} = H(\bu)_{M \times K} X_{K \times L} + W_{M \times L},
\ee
where $X$ and $W$ are the matrices of unknown coefficients $x_{k,l}$'s and noise terms $w_{m,l}$'s, for $k \in [K]$, $m \in [M]$, $l \in [L]$. In fact, $Y$ is the received signal, i.e., the channel output in a coding problem, and the goal is to decode, 
i.e., 
recover $\bu$ from $Y$. 

Next, we discuss how the problem can be mapped to a subspace coding problem. Note that 
matrix $X$ is 
completely unknown. Hence, in the transition from $H(\bu)$ to $H(\bu)X$ the only preserved information is the subspace spanned by the columns of $H(\bu)$. This exactly mirrors the logic behind the subspace coding problem in \cite{KK,soleymani2022analog}. We refer to the overall encoding function as a \textit{DoA sensing encoder}, 
defined below. 

\begin{definition}[\rr{DoA sensing encoder}]
\label{encoder-def}
For each choice of $M$-tuple $(d_1,d_2,\dots,d_M)$, with $d_m \in [N]$, let $H(\bl{\cdot}): \sM \rightarrow \C^{M \times K}$ be the corresponding function defined in \eq{encoder-model} with the components of $h(\bl{\cdot})$ set as in \eq{h-model}. Then the corresponding DoA sensing encoder $H : \sM \rightarrow \C^{M \times K}$ takes a message $\bu = (\theta_1,\theta_2,\ldots, \theta_K) \in \sM$ as the input and generates $H(\bu)$ as the output. Furthermore, the corresponding \textit{sensing subspace code} $\cC \subset \cP(\cV)$, where $\cV = \C^M$, is defined as follows:
\be{code-def}
\cC
= \{\Span{H(\bu)}: \bu \in \sM\}. 
\ee
\end{definition}
The receiver observes $Y$, according to \eq{Y-model}, and aims to find the message $\bu$ by invoking a decoder function $\cD:  \C^{M \times L} \rightarrow \sM$.

\noindent
{\bf Remark\,1.} It is shown in \cite{soleymani2022analog} that the subspace distance metric, defined in \eq{ds}, perfectly captures the capability of a subspace code to recover from \textit{subspace errors} and \textit{subspace erasures} as well as \rr{its robustness to} 
additive noise. Note that since $L \geq K$ \rr{is assumed}, we do not run into the issue of subspace erasure \rr{provided the $d_m$'s are chosen appropriately}. 
In fact, the decoder $\cD$ only needs to recover from the additive noise. This problem is well studied and understood in classical block coding. However, such studies can not be applied to the subspace coding domain in a straightforward fashion. 
\rrr{In contrast to prior work,} this paper aim\rr{s} at designing sensing subspace codes with the specific structure \rr{in \eq{h-model}} imposed by the sensing problem, while maximizing the minimum subspace distance.
\bl{Such} 
subspace codes 
have not been studied before.


\section{Sensing Subspace Codes: New insights and (near) Optimal Designs for Single Target}\label{sec:designs}

\rr{This section examines in further detail the single source} 
case 
$K=1$, 
\rr{where} 
the goal is to estimate a single 
\rr{DoA} 
$\theta$ 
\rr{belonging to} 
the grid of $N$ points specified in Section\,\ref{sec:three-A}. 
Beyond providing valuable insights into the structure of desirable constrained subspace codes, the single source setting is also an important problem to understand in its own right. Indeed, it finds applications in both sensing and communications, 
\cite{chiu2019active, liu2023integrated}, where several theoretical questions still remain open.

\rr{T}he problem of subspace code design in 
\rr{the} case 
\rr{$K=1$}
is reduced to packing lines. 
The problem of packing lines in $G_{1,M}(\R)$ was first studied by Shannon \cite{shannon1959probability}. 
A subtle difference between this and our problem is that spherical codes are packings in $G_{1,M}(\R)$, whereas our case study for $K=1$ is reduced to code designs in $G_{1,M}(\C)$, which is actually equivalent to $G_{2,M}(\R)$. We also study this problem under the specific structure \rr{\eq{h-model}} imposed by the DoA estimation problem. 

To further simplify the expressions, let $\alpha := e^{j \pi \sin\theta}$. Due to the specific discretization of $\theta$, i.e., $\sin\theta$ belonging to a grid of $N$ equally distanced points in $[-1,1)$, we have $\alpha \in \cZ(N)$, i.e., $\alpha^N = 1$. In particular, for $K=1$, the one-dimensional sensing subspace code $\cC$, for a certain choice of $(d_1,d_2,\dots,d_M)$ as defined in Definition\,\ref{encoder-def}, is given by:
\be{code-def}
\cC = \bigl\{\Span{(\alpha^{d_1},\alpha^{d_2},\dots,\alpha^{d_M})}: \alpha \in \cZ(N)\bigr\} . 
\ee
Here, codewords are $1$-dimensional spans of single-column matrices. However, for ease of notation, a single-column matrix is represented as a row, e.g., $(\alpha^{d_1},\alpha^{d_2},\dots,\alpha^{d_M})$. 
Note that $|\cC|=N$ and $\cC \subset G_{1,M}(\C)$. For $\alpha \in \cZ(N)$, let $\bc(\alpha)$ denote the corresponding codeword in $\cC$, i.e., 
\be{codeword-def}
\bc(\alpha) = \Span{(\alpha^{d_1},\alpha^{d_2},\dots,\alpha^{d_M})}. 
\ee
Also, the subspace distance $d^{(s)}(.,.)$ between two codewords $\bc(\alpha)$ and $\bc(\alpha')$ is given by
\be{ds-alpha}
d^{(s)}(\bc(\alpha), \bc(\alpha')) = 1 - \frac{1}{M^2} \bigg|\sum_{m=1}^M (\alpha^* \alpha')^{d_m} \bigg|^2,
\ee
where $.^*$ 
\rr{denotes} 
complex \rr{conjugation.} 
\rr{T}he minimum distance $\ds_{\min}(\cC)$, defined in \eq{dsmin-def}, can be rewritten as follows:
\be{ds-min}
\ds_{\min}(\cC) \,\deff\, \min_{\alpha,\alpha' \in \cZ(N), \alpha \neq \alpha'} d^{(s)}(\bc(\alpha), \bc(\alpha')).
\ee

 For the decoder in the case of $L=K=1$, one can consider the minimum distance decoder $\cD_{\min}$ that first computes
\be{decoder}
\hat{\alpha} =  \underset{\alpha: \alpha \in \cZ(N)}{\text{arg\,\rr{max}}} |Y^{\text{H}} \bc(\alpha)|,
\ee
and then \rr{maps} $\hat{\alpha}$ 
to the corresponding \rr{output message} $\hat{\theta}$. 
\rr{The output of the minimum distance decoder in \eq{decoder} can be shown to be equivalent of the maximum likelihood estimate of $\alpha$ under i.i.d. complex circularly symmetric Gaussian noise.} 
Next, we establish a relation between $\ds_{\min}(\cC)$ and the probability of error of $\cD_{\min}$ defined naturally as 
$$
P_e(\cD_{\min}) \,\deff\, \text{Pr}\{\hat{\theta} \neq \theta\},
$$
under an i.i.d. complex Gaussian noise model for the noise terms $w_m$'s. Let the receiver 
SNR
across each antenna be $\frac{1}{\sigma^2}$, where $\sigma^2$ can be 
regarded as 
the normalized noise power.

\begin{theorem}
\label{thm1}[\rr{Probability of error of $\mathcal{D}_{\min}$}]
For a one-dimensional sensing subspace code $\cC$ with minimum subspace distance $d_{\min} := \ds_{\min}(\cC)$, $P_e(\cD_{\min})$ is bounded as follows:
\be{Pebound}
P_e(\cD_{\min}) \leq \exp{\Big(-\frac{M}{4\sigma^2} (1-\sqrt{1-d_{\min}})^2\rr{+\ln N}\Big)}.
\ee
\end{theorem}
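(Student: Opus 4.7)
The plan is to combine a union bound over the $N-1$ incorrect codewords with a Chernoff/Gaussian-tail bound on the pairwise error probability, the standard recipe for analyzing minimum-distance decoders. Fix the transmitted symbol as $\alpha_0 \in \cZ(N)$; consistent with the normalization SNR $=1/\sigma^2$, assume $|x|^2 = 1$. Writing $\tilde{h}(\alpha) := (\alpha^{d_1},\dots,\alpha^{d_M})^T/\sqrt{M}$ for the unit-norm generator of $\bc(\alpha)$, we have $Y = \sqrt{M}\,x\,\tilde{h}(\alpha_0) + w$ with $w \sim \mathcal{CN}(0,\sigma^2 I_M)$, and the decoder in \eq{decoder} errs iff some $\alpha' \neq \alpha_0$ satisfies $|\tilde{h}(\alpha')^H Y| \geq |\tilde{h}(\alpha_0)^H Y|$. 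The union bound thus reduces the problem to controlling the pairwise probability $P_{\text{pair}}(\alpha_0,\alpha')$ uniformly in $\alpha'$.

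For the pairwise analysis, set $\rho = \tilde{h}(\alpha')^H \tilde{h}(\alpha_0)$, so that $\tilde{h}(\alpha_0)^H Y = \sqrt{M}\,x + n_0$ and $\tilde{h}(\alpha')^H Y = \sqrt{M}\,x\,\rho + n_1$ with $n_0,n_1 \sim \mathcal{CN}(0,\sigma^2)$. A short triangle-inequality argument applied to the error condition yields $|n_0| + |n_1| \geq \sqrt{M}(1-|\rho|)$, hence $\max(|n_0|,|n_1|) \geq \sqrt{M}(1-|\rho|)/2$; combined with the complex-Gaussian (Rayleigh) tail $\Pr(|n_i| \geq t) = \exp(-t^2/\sigma^2)$, this produces a pairwise bound proportional to $\exp\!\bigl(-M(1-|\rho|)^2/(4\sigma^2)\bigr)$.

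To close the loop, \eq{ds-alpha} identifies $|\rho|^2 = 1 - \ds(\bc(\alpha_0),\bc(\alpha'))$, so $|\rho| \leq \sqrt{1-d_{\min}}$ for every $\alpha' \neq \alpha_0$, and therefore $1-|\rho| \geq 1 - \sqrt{1-d_{\min}}$. Plugging this worst-case correlation into the pairwise bound and absorbing $N-1 \leq N$ as an additive $\ln N$ in the exponent delivers the claimed inequality. The one delicate point is shaving off the multiplicative constant from the $\max(|n_0|,|n_1|)$ splitting step so that only $\ln N$ (and not $\ln(cN)$) appears in the exponent; I would address this by replacing the triangle-inequality split with a direct Chernoff bound on the difference $|\tilde{h}(\alpha')^H Y|^2 - |\tilde{h}(\alpha_0)^H Y|^2$, optimizing over the tilting parameter to recover the clean constant $1/(4\sigma^2)$. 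Everything else is a fairly standard noncoherent-detection calculation.
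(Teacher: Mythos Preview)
Your approach is sound in spirit and would yield a bound of the same form, but the decomposition differs from the paper's in a way that matters for the constant. The paper does \emph{not} decompose into pairwise error events. Instead, it identifies a single sufficient condition for correct decoding: if the $N$ noise projections $|W^{\text{H}}\bc(\gamma)|$, $\gamma\in\cZ(N)$, are \emph{all} below $\tfrac{M}{2}(1-\sqrt{1-d_{\min}})$, then a triangle-inequality argument gives $|Y^{\text{H}}\bc(\alpha_0)|>|Y^{\text{H}}\bc(\alpha')|$ for every competitor $\alpha'$. A union bound over these $N$ Rayleigh tails (each projection has variance $M\sigma^2$) then yields exactly $N\exp\bigl(-\tfrac{M}{4\sigma^2}(1-\sqrt{1-d_{\min}})^2\bigr)$, i.e., a clean $\ln N$ in the exponent with no residual constant.

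Your pairwise route produces $2(N-1)$ Rayleigh-tail terms (two per competitor, from splitting on $\max(|n_0|,|n_1|)$), which is off from the stated bound by a factor of roughly $2$. You correctly flag this, but the Chernoff patch you propose is not what the paper does, and it is far from clear that a tilted bound on $|\tilde h(\alpha')^{\text{H}}Y|^2-|\tilde h(\alpha_0)^{\text{H}}Y|^2$ would recover precisely the exponent $\tfrac{M}{4\sigma^2}(1-\sqrt{1-d_{\min}})^2$; noncoherent pairwise error exponents generically take a different form. The paper sidesteps the issue entirely by taking the union over the $N$ noise projections rather than over the $N-1$ pairwise error events, so the projection onto the true codeword $\bc(\alpha_0)$ is counted once in total rather than once per competitor. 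That global-noise-event viewpoint is the simpler fix for the constant you were looking for.
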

\begin{proof}
Let $\alpha$ correspond to the true $\theta$. For any other $\alpha' \neq \alpha$, with $\alpha' \in \cZ(N)$, utilizing \eq{ds-alpha} and the code structure specified in \eq{code-def} we have
\be{lemma1-eq1}
|\bc(\alpha)^{\text{H}} \bc(\alpha')| \leq M \sqrt{1-d_{\min}}.
\ee
Suppose that the noise is normalized, i.e., $|x|=1$ for the single scattering coefficient $x$. Then by \eq{lemma1-eq1} and noting that $|\bc(\alpha)^{\text{H}} \bc(\alpha)| = M$, the decoder $\cD_{\min}$ is guaranteed to be successful 
\rr{if} 
the \rr{absolute value} of the overall noise term \rr{$W^{\text{H}}\bc(\gamma)$} 
is less than $\frac{1}{2}(M-M \sqrt{1-d_{\min}}),\rr{\forall \gamma\in\cZ(N)}$. Also, note that the variance of \rr{this} overall noise term is $M\sigma^2$. 
\rr{Eq.~\eq{Pebound}} is then established by utilizing the cumulative distribution function of the Rayleigh distribution representing the \rr{magnitudes} 
of the \rr{$N$} overall noise term\rr{s}, followed by the union bound on the probability of the error event for all $\alpha'$.
\end{proof}

The result in \Tref{thm1} shows that improving $d_{\min}$ directly improves the upper bound on the probability of error of the minimum distance decoder. This further justifies our approach to construct sensing subspace codes with the aim to maximize the minimum subspace distance. 
\subsection{Near Optimal Code Constructions}
\label{sec:four}

In this section we aim \rrr{to} 
design 
sensing subspace codes with a minimum distance that is as large as possible by carefully choosing the underlying parameters $d_1,d_2,\dots,d_M$. 
\rrr{I}n the context of block coding, the problem of maximizing the minimum distance of a code of given size and length is one of the most classical coding theory problems and has been extensively studied in the past several decades. 

Given \eq{ds-alpha}, 
maximiz\rrr{ing} $\ds_{\min}(\cC)$ \rrr{is equivalent} to minimiz\rrr{ing} (with respect to $\{d_m\}_{m=1}^M$)
\be{minimax}
\max_{\alpha \neq 1, \alpha \in \cZ(N)} \bigg|\sum_{m=1}^M \alpha^{d_m}\bigg|^2. 
\ee
Note that $\alpha^* \alpha' \in \cZ(N)$ and $\alpha^* \alpha' \neq 1$, for $\alpha, \alpha' \in \cZ(N)$ and $\alpha \neq \alpha'$, which is why we simply replace $\alpha \alpha'$ by $\alpha$ in \eq{minimax}. \rr{Recalling that $\alpha = e^{j\pi\sin\theta}$, \eq{minimax} can actually be interpreted as the maximum value of the \emph{unweighted \bl{array} beampattern} \cite{rajamaki2024effect}.}

Next, we make a connection between the problem of minimizing the expression in \eq{minimax} and the classical problem of Golomb ruler design \cite{sidon1932satz}, in the particular regime with $M \approx \sqrt{N}$. Note that one can write
\be{minimax2}
\begin{aligned}
\bigg|\sum_{m=1}^M \alpha^{d_m}\bigg|^2 &=  \bigg(\sum_{m=1}^M \alpha^{d_m}\bigg)\bigg(\sum_{m=1}^M \alpha^{d_m}\bigg)^* 
= M + \sum_{\mathclap{i,\ell \in [M], i \neq \ell }} \alpha^{d_i - d_\ell}. 
\end{aligned}
\ee
For a set $\cA \subset \R$, 
\rr{let the difference set of distinct elements be}
$$
\cA - \cA \,\deff\,\{a - a': a,a' \in \cA, a \neq a'\}.
$$ 
Then the idea is that if we pick $\cA = \{d_1,\dots,d_M\}$ in such a way that $(\cA-\cA) \mod N$ covers almost all the elements in $[N]$ exactly once, then $\sum_{d \in \cA - \cA} \alpha^d$ would be 
close to zero when normalized by $N$ (note that $\sum_{d \in [N]} \alpha^d = 0$, since $\alpha$ is an $N$-th root of unity). As a result, the problem becomes relevant to the design of Golomb ruler. We leverage this connection to obtain a \textit{good} $\cA$, 
\rr{as we discuss} 
next.

Let us recall the Bose-Chowla construction of Golomb rulers \cite{bose1962theorems}. Let $q = p^n$ be a power of a prime $p$, and let $g$ denote a primitive element in $\F_{q^2}$. Then the $q$ integers in $\cS$ defined as
\be{BC-con}
\cS\, \deff\, \{i \in [q^2-2]: g^i - g \in \F_q\}
\ee
have distinct pairwise differences modulo $q^2-1$. Furthermore, the set $(\cS - \cS) \mod (q^2-1)$ has exactly $q(q-1)$ elements and equals the set of nonzero integers less than $q^2-1$ that are not divisible by $q+1$.  

Now, let us assume $N = q^2-1$, where $q$ is a power of a prime, and $M = q$. Then the sensing subspace code based on the Bose-Chowla construction, denoted by $\cC^{\text{(BC)}}$, is formally defined as follows. 

\begin{definition}[\rr{Bose-Chowla sensing subspace code}]
\label{CBC-def}
We fix $\cA = \{d_1,\dots,d_M\}$ to be the set $\cS$ given by the Bose-Chowla construction, specified in \eq{BC-con}. Then $\cC^{\text{(BC)}}$ of size $N$ is the sensing subspace code corresponding to $\cA$, as defined in \eq{code-def}. 
\end{definition} 

The following theorem presents the result on the minimum subspace distance of the code $\cC^{\text{(BC)}}$.

\begin{theorem}[\rr{Min distance of Bose-Chowla code}]
\label{thm-dmin}
For the sensing subspace code $\cC^{\text{(BC)}}$, defined in Definition\,\ref{CBC-def}, we have
\be{ds-min2}
\ds_{\min}(\cC^{\text{(BC)}}) > 1 - \frac{2}{M}. 
\ee
\end{theorem}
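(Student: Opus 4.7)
The plan is to reduce the claim to an upper bound on $\bigl|\sum_{m=1}^M \beta^{d_m}\bigr|^2$ for an arbitrary $\beta \in \cZ(N)\setminus\{1\}$, and then to exploit the combinatorial structure of the Bose--Chowla set $\cS$. Using \eq{ds-alpha} with $\beta \deff \alpha^*\alpha'$, the claim $\ds_{\min}(\cC^{\text{(BC)}}) > 1 - 2/M$ is equivalent to
\[
\bigg|\sum_{m=1}^M \beta^{d_m}\bigg|^2 \;\leq\; 2M-1 \qquad \text{for all } \beta \in \cZ(N)\setminus\{1\}.
\]
Expanding the squared modulus as in \eq{minimax2} further reduces the task to proving $\bigl|\sum_{i\neq\ell}\beta^{d_i-d_\ell}\bigr| \leq M-1$.

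I would then invoke the defining property of the Bose--Chowla construction recalled right after \eq{BC-con}: the pairwise differences $d_i - d_\ell \bmod N$ are distinct and, taken together, traverse each element of $T \deff \{d \in \{1,\ldots,N-1\} : (q+1)\nmid d\}$ exactly once. Consequently $\sum_{i\neq\ell}\beta^{d_i-d_\ell} = \sum_{d\in T}\beta^d$. Observing that $\{0,1,\ldots,N-1\}\setminus(\{0\}\cup T)$ consists precisely of the $q-2 = M-2$ nonzero multiples of $q+1$ in $[1,N-1]$, and that $\sum_{d=0}^{N-1}\beta^d = 0$ for any $\beta\in\cZ(N)\setminus\{1\}$, I obtain
\[
\sum_{d \in T}\beta^d \;=\; -\,1 \;-\; \sum_{k=1}^{q-2}\beta^{k(q+1)}.
\]

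The triangle inequality applied to the $M-2$ unit-modulus terms on the right-hand side yields $\bigl|\sum_{d\in T}\beta^d\bigr| \leq 1 + (M-2) = M-1$, which closes the chain of reductions and establishes \eq{ds-min2}. I expect no substantial obstacle in the argument: the one non-routine move is matching the Bose--Chowla guarantee on the difference set $\cS-\cS \bmod N$ with the index set of the character sum $\sum_{d}\beta^d$ over $\cZ(N)$; once that alignment is noticed, a geometric-series identity and a single triangle inequality finish the job. A minor refinement---splitting on whether $\beta^{q+1}=1$ and evaluating the inner geometric sum $\sum_{k=1}^{q-2}\beta^{k(q+1)}$ exactly---would in fact tighten the bound to $\ds_{\min}(\cC^{\text{(BC)}}) \geq 1 - 1/M$, but this refinement is not required for the stated inequality.
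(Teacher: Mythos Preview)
Your proof is correct and follows essentially the same route as the paper's: reduce to bounding $\bigl|\sum_{i\neq\ell}\beta^{d_i-d_\ell}\bigr|$, identify this sum over the Bose--Chowla difference set with minus the sum over its $(M-1)$-element complement in $\{0,\ldots,N-1\}$ via $\sum_{d=0}^{N-1}\beta^d=0$, and apply the triangle inequality. Your closing remark that evaluating the geometric sum $\sum_{k=1}^{q-2}\beta^{k(q+1)}$ exactly (it equals $q-2$ if $\beta^{q+1}=1$ and $-1$ otherwise, since $\beta^{(q+1)(q-1)}=\beta^N=1$) tightens the bound to $\ds_{\min}(\cC^{\text{(BC)}})\geq 1-1/M$ is also correct and in fact sharpens the paper's result.
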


\begin{proof}
Given the expression for $\ds$ in \eq{ds-alpha} and the simplification 
\eq{minimax}, \rrr{establishing} \eq{ds-min2} 
is equivalent to show\rrr{ing}
\be{ds-min3}
\bigg|\sum_{m=1}^M \alpha^{d_m}\bigg|^2 < 2M,
\ee
for any $\alpha \in \cZ(N)$ with $\alpha \neq 1$.
Note that the choice of $\cA = \{d_1,\dots,d_M\}$ as the Bose-Chowla Golomb ruler ensures that $\cA^- := (\cA - \cA) \mod N$, where $N=q^2-1$, contains $N-M+1$ distinct elements from $[N-1] \cup \{0\}$, each of them exactly once. Also, note that since $\alpha \in \cZ(N)$ we have
\be{thm-dmin-eq1}
\sum_{i,\ell \in [M], i \neq \ell } \alpha^{d_i - d_\ell}  = \sum_{m \in \cA^-} \alpha^{m} = \qquad -
\sum_{\mathclap{m' \in [N-1] \cup \{0\} \setminus \cA^-}} \alpha^{m'}.
\ee
Therefore, we have
\be{thm-dmin-eq2}
\big|\sum_{i,\ell \in [M], i \neq \ell } \alpha^{d_i - d_\ell} \big| \leq |[N-1] \cup \{0\} \setminus \cA^-| = M-1,
\ee
where we used \eq{thm-dmin-eq1} together with the fact that $|\cA^-| = N -M+1$. This together with the expression in \eq{minimax2} imply that 
\be{ds-min4}
\bigg|\sum_{m=1}^M \alpha^{d_m}\bigg|^2 \leq 2M-1,
\ee
which, by \eq{ds-min3}, \rrr{completes the proof.}
\end{proof}

\begin{corollary}
The probability of error of the minimum distance decoder $\cD_{\min}$ for the code $\cC^{\text{(BC)}}$ is upper bounded as follows:
\be{Pebound2}
P_e(\cD_{\min}) < \exp{\Big(-\frac{M}{4\sigma^2} \Big(1-\frac{\sqrt{2}}{\sqrt{M}}\Big)^2 \rr{+\ln N} \Big)},
\ee
where $\sigma^2$ is the normalized power of a single noise term in the receiver system model.
\end{corollary}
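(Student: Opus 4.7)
The plan is to chain together \Tref{thm1} and \Tref{thm-dmin} via a monotonicity argument, since the corollary is essentially a direct consequence of substituting the min-distance estimate for the Bose--Chowla code into the generic error-probability bound.

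First, I would invoke \Tref{thm1} applied to $\cC = \cC^{\text{(BC)}}$, which gives
$$
P_e(\cD_{\min}) \leq \exp\!\Big(-\tfrac{M}{4\sigma^2}\bigl(1-\sqrt{1-d_{\min}}\bigr)^2 + \ln N\Big),
$$
where $d_{\min} = \ds_{\min}(\cC^{\text{(BC)}})$. Next, I would use \Tref{thm-dmin}, which asserts $d_{\min} > 1 - 2/M$, or equivalently $1 - d_{\min} < 2/M$. Taking square roots (valid since both sides are nonnegative) yields $\sqrt{1-d_{\min}} < \sqrt{2}/\sqrt{M}$, and hence
$$
1 - \sqrt{1-d_{\min}} > 1 - \tfrac{\sqrt{2}}{\sqrt{M}}.
$$
Provided $M \geq 2$ (which holds in the regime of interest, $M = q \geq 2$), the right-hand side is nonnegative, so squaring preserves the inequality and gives $(1-\sqrt{1-d_{\min}})^2 > (1-\sqrt{2}/\sqrt{M})^2$.

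Finally, I would transfer this inequality through the exponential: since $\exp(\cdot)$ is monotone increasing and the coefficient $-M/(4\sigma^2)$ is negative, the bound from \Tref{thm1} is in turn upper bounded by
$$
\exp\!\Big(-\tfrac{M}{4\sigma^2}\bigl(1 - \tfrac{\sqrt{2}}{\sqrt{M}}\bigr)^2 + \ln N\Big),
$$
which is exactly \eq{Pebound2}. There is no substantive obstacle here: the proof amounts to a clean composition of the two preceding theorems together with the monotonicity of $t \mapsto (1-\sqrt{t})^2$ on $[0,1]$ and of the exponential. The only small caveat to flag is the mild condition $M\geq 2$ needed to ensure the squared quantity is preserved under the inequality, which is automatic in the Bose--Chowla regime where $M = q$ is a prime power.
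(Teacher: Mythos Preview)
Your proof is correct and follows exactly the paper's approach: the paper's own proof simply states that the corollary follows from combining \Tref{thm-dmin} with \Tref{thm1}, and you have spelled out the monotonicity details that make this chaining rigorous. Your caveat about $M\geq 2$ is a nice touch that the paper leaves implicit.
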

\begin{proof}
The proof is by the lower bound on the minimum distance established in \Tref{thm-dmin} together with \Tref{thm1}.
\end{proof}

Next, we comment on the optimality of the code $\cC^{\text{(BC)}}$. Note that two lines being at subspace distance one from each other implies that the two lines are orthogonal. And the code $\cC^{\text{(BC)}}$ has $N=M^2-1$ lines that, by \Tref{thm-dmin}, are \textit{almost} orthogonal. An upper bound on the best minimum distance of a code of size $N$ in $G_{1,M}(\C)$ can be obtained by the Welch bound, that is $\approx 1 - 1/\sqrt{N}$. In this sense the \textit{gap-to-one} of $\ds_{\min}(\cC^{\text{(BC)}})$ is only within a factor of $2$ of the best one can hope to achieve, according to the Welch bound. Note that the character-polynomial (CP) code of \cite{soleymani2022analog}, when considered in the same regime, attains the distance $\approx 1 - 1/M$ almost matching the Welch bound. However, 
our sensing subspace code has the specific structure imposed by the sensing problem 
\rr{unlike the CP code in \cite{soleymani2022analog}.} 
Yet, our codes operate very close to what is best achievable by codes without specific structural constrains.

\subsection{Sub-optimality of uniform sensing}

Uniform linear arrays (ULA) have been the dominant choice of array geometry for decades in sensing and communication \cite{pesavento2023three}. 
The ULA is the spatial analogue of uniform (Nyquist) sampling, and hence its (spatial) frequency domain properties as captured by beamforming, are well-understood and amenable to analysis. Furthermore, the uniform geometry allows the development of fast array processing methods, often based on the Fast Fourier Transform (FFT) algorithm. We now take a critical look at this popular choice from the perspective of subspace coding, and evaluate if it is a desirable choice as a sensing code in the regime $N\approx M^2$. A ULA sensing code is given by the set \be{ULA-con}
\cS_{\text{ULA}}\, \deff\, [M]-1.
\ee
In other words, the sensor locations $d_m$ are consecutive integers from $0$ to $M-1$. It is easy to see that the set $\cS_{\text{ULA}}-\cS_{\text{ULA}}$ has $2M-1$ consecutive integers from $-M+1$ to $M-1$. As earlier, we consider the regime $N=M^2-1$. Let $\cC^{\text{(ULA)}}$ denote the sesnsing subspace code corresponding to $\cS_{\text{ULA}}$. The minimum distance of the code $\cC^{\text{(ULA)}}$ is upper bounded as 
\be{min-ULA}
\ds_{\min}(\cC^{\text{(ULA)}}) \leq 1-\frac{4}{\pi^2}. 
\ee
This result can be proved as follows. From \eq{minimax}, \be{ULA_dmin} \begin{aligned} \ds_{\min}&(\cC^{\text{(ULA)}}) =1-\frac{1}{M^2}\max_{\alpha \neq 1, \alpha \in \cZ(N)} \bigg|\sum_{m=0}^{M-1} \alpha^{m}\bigg|^2 \\ & \leq1 -\frac{1}{M^2} \bigg|\sum_{m=0}^{M-1} e^{j\frac{2\pi m}{N}}\bigg|^2  
= 1-\frac{1}{M^2} \bigg|\frac{\sin{\frac{\pi M}{N}}}{\sin{\frac{\pi}{N}} }\bigg|^2. 
\end{aligned}
\ee
In the regime $N=M^2-1$ and $M>3$, we have $\frac{M}{N}<\frac{1}{2}$. Using Jordan's inequality, which states that $\frac{2}{\pi}x \leq \sin{x} \leq x, \, 0\leq x\leq \frac{\pi}{2}$, we obtain that $$\frac{1}{M^2}\bigg|\frac{\sin{\frac{\pi M}{N}}}{\sin{\frac{\pi}{N}} }\bigg|^2 \geq \frac{1}{M^2}\frac{4M^2}{\pi^2} \geq \frac{4}{\pi^2}.$$ We obtain \eq{min-ULA} by using the above result in \eq{ULA_dmin}.
Hence, in the regime $N=M^2-1$ (can be relaxed to larger regime), the minimum distance of the ULA subspace code remains strictly bounded away from $1$ even as $M$ tends to infinity. Therefore, it falls short of achieving the Welch bound, and even asymptotically, these codewords do not become orthogonal. 
\vspace{-0.2cm}
\subsection{\rr{Numerical Results}}
We conclude this section by numerically illustrating the developed theoretical results. \cref{fig:dmin} shows the minimum subspace distance $\ds_{\min}$ in \eq{ds-min} of the Bose-Chowla \cite{dimitromanolakis2002analysis} sensing code \eq{BC-con} and the ULA \eq{ULA-con} sensing code, as a function of $M$, with $N=M^2-1$. The minimum distance approaches $1$ as $M$ increases in case of the Bose-Chowla ruler, whereas it is strictly less than $1$ (in fact, approaching $0$) in case of the ULA. This is consistent with \eqref{ds-min2} and \eq{min-ULA}, respectively. Hence, all codewords of the Bose-Chowla codebook are approximately orthogonal, whereas the ULA codebook contains codewords that are practically indistinguishable in presence of noise. This is validated by \cref{fig:Pe}, which shows the empirical probability of error of the minimum distance decoder \eq{decoder}, along with an upper bound on the true probability of error (the minimum of the trivial bound $P_e \leq 1$ and \eq{Pebound} evaluated using the values of $\ds_{\min}$ in \cref{fig:dmin}), both as a function of SNR (left panel, $M=19$) and $M$ (right panel $\text{SNR}=0$ dB). The probability of error of the Bose-Chowla codebook rapidly approaches $0$ at much lower SNR or values of $M$ than the ULA. 
The results are averaged over $10^5$ Monte Carlo trials assuming a unit-magnitude source $x\!=\!\frac{1+j}{\sqrt{2}}$, i.i.d. complex circularly symmetric Gaussian noise of variance $\sigma^2$ ($\text{SNR}:=-20\log_{10}(\sigma)$), and a source DoA $\theta$ picked uniformly at random from grid points $\arcsin(-1+2(n-1\bl{)}/N)$, $n\in[N]$.
\begin{figure}
    	\centering
    	\begin{tikzpicture} 
    		\begin{axis}[width=4.75 cm,height= 3 cm,ylabel={$\ds_{\min}(\mathcal{C})$},xlabel= {$M$},xmin=2,xmax=1e2, ymin =1e-3,ymax=  2,
      xlabel shift = {-12 pt},xticklabels={},extra x ticks={2,100},extra x tick labels={$2$,$100$},
      ymode=log,xmode=log,legend style = {at={(0.5,1.03)},anchor=south,draw=none,fill=none},legend columns=2,legend style={/tikz/every even column/.append style={column sep=0.2cm}}]
         
            \addplot[black,thick,draw] table[x=M,y=dmin]{M_vs_dmin_BCG_M_149.dat};
            \addlegendentry{Bose-Chowla}
			\addplot[red,dashed,thick,draw] table[x=M,y=dmin]{M_vs_dmin_ULA_M_317.dat};
            \addlegendentry{ULA}
    		\end{axis}%
    	\end{tikzpicture}\vspace{-0.3cm}
     \caption{\rr{Minimum distance of sensing subspace codes. As the code length $M$ (number of antennas) increases, $\ds_{\min}$ approaches its maximum value $1$ in case of the Bose-Chowla ruler, and minimum value $0$ in case of ULA ($N\!=\!M^2\!-1$).}}\label{fig:dmin}\vspace{-.3cm}
    \end{figure}
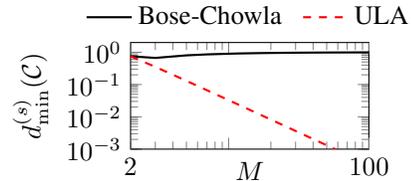
    \begin{figure}
        \centering
        \newcommand{\M}{19}
        \begin{tikzpicture}
	\begin{groupplot}[
		group style={
			group size=2 by 1,
			horizontal sep=.5cm,
			vertical sep=0cm,
		},
		width=4.75 cm,height=3 cm,ymin=1e-3,ymax=2,
  ylabel={$P_e(\mathcal{D}_{\min})$},xticklabel shift = 0 pt,yticklabel shift=0pt,ymode=log,title style={yshift=-82 pt},xlabel shift = {-12 pt},ylabel shift = {0 pt},legend style={/tikz/every even column/.append style={column sep=0.2cm}}
		]%
		\nextgroupplot[%
		,legend to name=grouplegend
		,mark=none,legend style = {draw=none,fill=none},legend columns=3,
  xlabel={SNR},xmin=-10,xmax=10,xticklabels={},extra x ticks={-10,-5,5,10}
		]
		\addlegendimage{empty legend}
         \addlegendentry{Empirical:}
             \addplot[black,thick,draw] table[x=SNR,y=Pe]{SNR_vs_Pe_BCG_SNR_10.dat};
            \addlegendentry{Bose-Chowla}
            
			\addplot[red,dashed,thick,draw] table[x=SNR,y=Pe]{SNR_vs_Pe_ULA_SNR_10.dat};
            \addlegendentry{ULA}
            
            \addlegendimage{empty legend}
            \addlegendentry{Upper bound:}
            \addplot[black,dashdotted,thick,draw,opacity=0.5] table[x=SNR,y expr={min(1,(\M^2-1)*exp(-\M*(1-sqrt(1-\thisrow{dmin}))^2/(4*10^(-\thisrow{SNR}/10))))}]{SNR_vs_Pe_BCG_SNR_10.dat};
            \addlegendentry{Bose-Chowla}
            
            \addplot[red,dotted,thick,draw,opacity=0.5] table[x=SNR,y expr={min(1,(\M^2-1)*exp(-\M*(1-sqrt(1-\thisrow{dmin}))^2/(4*10^(-\thisrow{SNR}/10))))}]{SNR_vs_Pe_ULA_SNR_10.dat};
            \addlegendentry{ULA}
  
		\nextgroupplot[ylabel={},
  xlabel={$M$},xmode=log,xmin=2,xmax=100,yticklabels={},xticklabels={},extra x ticks={2,100},extra x tick labels={$2$,$100$}]
           \addplot[black,thick,draw] table[x=M,y=Pe]{M_vs_Pe_BCG_M_149.dat};
            
			\addplot[red,dashed,thick,draw] table[x=M,y=Pe]{M_vs_Pe_ULA_M_100.dat};
            
            \addplot[black,dashdotted,thick,draw,opacity=0.5] table[x=M,y expr={min(1,(\thisrow{M}^2-1)*exp(-\thisrow{M}*(1-sqrt(1-\thisrow{dmin}))^2/(4*10^(-0/10)))}]{M_vs_Pe_BCG_M_149.dat};
            
            \addplot[red,dotted,thick,draw,opacity=0.5] table[x=M,y expr={min(1,(\thisrow{M}^2-1)*exp(-\thisrow{M}*(1-sqrt(1-\thisrow{dmin}))^2/(4*10^(-0/10))))}]{M_vs_Pe_ULA_M_100.dat};
	\end{groupplot}
	\node at (group c1r1.north) [anchor=north, yshift=1.2cm, xshift=2cm] {\ref{grouplegend}};
\end{tikzpicture}\vspace{-.3cm}
        \caption{\rr{Probability of error of minimum distance decoder. The Bose-Chowla ruler achieves a low probability of error $P_e$ both when SNR (left) or $M$ (right) increases. The ULA is significantly less robust to noise due to the smaller minimum distance of the associated sensing subspace code---indeed, when the SNR is fixed, $P_e$ approaches $1$ even as $M$ grows ($N=M^2-1$).}}
        \label{fig:Pe}\vspace{-.2cm}
    \end{figure}
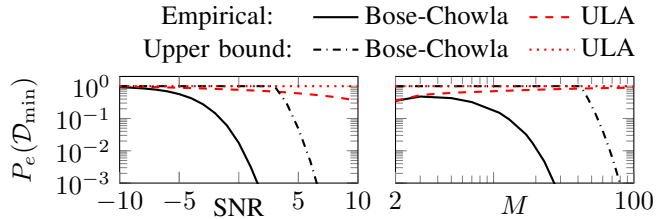

\section{Discussion and Future Directions}
\label{sec:six}
To the best of \rr{our} knowledge, this is the first work to 
\rr{unveil fundamental connections between subspace codes and sensing.} 
We hope that our work motivates further research at the intersection of \rr{these two fields,} 
\rr{which we believe constitutes a fertile ground for new ideas.} 
To this end, we \rr{briefly} discuss several directions \rr{for future} research. \rr{Firstly, sensing subspace codes with robustness to perturbations in the $d_m$'s or the $N$ grid points would be highly valuable in practical applications.} A \rr{related} question is \rr{whether} we \rr{can} extend \rr{these code designs beyond the considered regime $N\approx M^2$}? Such a question will naturally lead to investigation of higher-order difference sets. 
Another immediate extension is higher-dimensional sensing subspace codes \rr{where} 
$K > 1$. \rr{This} in general is a very challenging problem, \rr{with the only currently available results being} asymptotic limits on packing in Grassmann manifolds (as $M$ grows large, and $K$ is fixed) \cite{barg2002bounds}, and \rr{a} few explicit analog subspace code constructions beyond $K = 1,2$ \cite{soleymani2021new}. 
\rr{Further adding to the rich structure of $h(\bl{\cdot})$ are constraints imposed by} low-complexity hybrid \rr{beamforming} architectures 
\cite{koochakzadeh2020compressed}, \rr{or in the case of active sensing, the transmitted waveforms \cite{rajamaki2023importance}.} \rr{These can generally be modeled by} a \rr{structured} compression 
\rr{(wide)} matrix $A$, \rr{such that the effective system matrix in \eq{Y-model} is} $AH$. 
This problem has connections to subspace erasure problems and results from coding theory counterparts can provide new designs and guarantees. 

\bibliographystyle{IEEEtran}
\bibliography{references}

\begin{thebibliography}{10}
\providecommand{\url}[1]{#1}
\csname url@samestyle\endcsname
\providecommand{\newblock}{\relax}
\providecommand{\bibinfo}[2]{#2}
\providecommand{\BIBentrySTDinterwordspacing}{\spaceskip=0pt\relax}
\providecommand{\BIBentryALTinterwordstretchfactor}{4}
\providecommand{\BIBentryALTinterwordspacing}{\spaceskip=\fontdimen2\font plus
\BIBentryALTinterwordstretchfactor\fontdimen3\font minus
  \fontdimen4\font\relax}
\providecommand{\BIBforeignlanguage}[2]{{%
\expandafter\ifx\csname l@#1\endcsname\relax
\typeout{** WARNING: IEEEtran.bst: No hyphenation pattern has been}%
\typeout{** loaded for the language `#1'. Using the pattern for}%
\typeout{** the default language instead.}%
\else
\language=\csname l@#1\endcsname
\fi
#2}}
\providecommand{\BIBdecl}{\relax}
\BIBdecl

\bibitem{KK}
R.~Koetter and F.~R. Kschischang, ``Coding for errors and erasures in random
  network coding,'' \emph{IEEE Transactions on Information Theory}, vol.~54,
  no.~8, pp. 3579--3591, 2008.

\bibitem{soleymani2022analog}
M.~Soleymani and H.~Mahdavifar, ``Analog subspace coding: A new approach to
  coding for non-coherent wireless networks,'' \emph{IEEE Transactions on
  Information Theory}, vol.~68, no.~4, pp. 2349--2364, 2022.

\bibitem{soleymani2021new}
------, ``New packings in {Grassmannian} space,'' in \emph{2021 IEEE
  International Symposium on Information Theory (ISIT)}.\hskip 1em plus 0.5em
  minus 0.4em\relax IEEE, 2021, pp. 807--812.

\bibitem{agrawal2001multiple}
D.~Agrawal, T.~J. Richardson, and R.~L. Urbanke, ``Multiple-antenna signal
  constellations for fading channels,'' \emph{IEEE Transactions on Information
  Theory}, vol.~47, no.~6, pp. 2618--2626, 2001.

\bibitem{zheng2002communication}
L.~Zheng and D.~N.~C. Tse, ``Communication on the {Grassmann} manifold: A
  geometric approach to the noncoherent multiple-antenna channel,'' \emph{IEEE
  transactions on Information Theory}, vol.~48, no.~2, pp. 359--383, 2002.

\bibitem{liu2020joint}
F.~Liu, C.~Masouros, A.~P. Petropulu, H.~Griffiths, and L.~Hanzo, ``Joint radar
  and communication design: Applications, state-of-the-art, and the road
  ahead,'' \emph{IEEE Transactions on Communications}, vol.~68, no.~6, pp.
  3834--3862, 2020.

\bibitem{ma2020joint}
D.~Ma, N.~Shlezinger, T.~Huang, Y.~Liu, and Y.~C. Eldar, ``Joint
  radar-communication strategies for autonomous vehicles: Combining two key
  automotive technologies,'' \emph{IEEE Signal Processing Magazine}, vol.~37,
  no.~4, pp. 85--97, 2020.

\bibitem{sun2020mimoradar}
S.~Sun, A.~P. Petropulu, and H.~V. Poor, ``{MIMO} radar for advanced
  driver-assistance systems and autonomous driving: Advantages and
  challenges,'' \emph{IEEE Signal Processing Magazine}, vol.~37, no.~4, pp.
  98--117, 2020.

\bibitem{ahmadipour2022aninformation}
M.~Ahmadipour, M.~Kobayashi, M.~Wigger, and G.~Caire, ``An
  information-theoretic approach to joint sensing and communication,''
  \emph{IEEE Transactions on Information Theory}, pp. 1--1, 2022.

\bibitem{liu2017cramerrao}
C.-L. Liu and P.~P. Vaidyanathan, ``{C}ram\'{e}r-{R}ao bounds for coprime and
  other sparse arrays, which find more sources than sensors,'' \emph{Digital
  Signal Processing}, vol.~61, pp. 43 -- 61, Feb 2017.

\bibitem{wang2017coarrays}
M.~Wang and A.~Nehorai, ``Coarrays, {MUSIC}, and the {C}ram\'{e}r-{R}ao
  bound,'' \emph{IEEE Transactions on Signal Processing}, vol.~65, no.~4, pp.
  933--946, Feb 2017.

\bibitem{sarangi2023superresolution}
P.~Sarangi, M.~C. H\"{u}c\"{u}meno\u{g}lu, R.~Rajam\"{a}ki, and P.~Pal,
  ``Super-resolution with sparse arrays: A non-asymptotic analysis of
  spatio-temporal trade-offs,'' \emph{IEEE Transactions on Signal Processing},
  pp. 1--14, 2023.

\bibitem{tao2006additive}
T.~Tao and V.~H. Vu, \emph{Additive Combinatorics}, ser. Cambridge Studies in
  Advanced Mathematics.\hskip 1em plus 0.5em minus 0.4em\relax Cambridge
  University Press, 2006.

\bibitem{barg2002bounds}
A.~Barg and D.~Y. Nogin, ``Bounds on packings of spheres in the {Grassmann}
  manifold,'' \emph{IEEE Transactions on Information Theory}, vol.~48, no.~9,
  pp. 2450--2454, 2002.

\bibitem{roy1947note}
S.~Roy, ``A note on critical angles between two flats in hyperspace with
  certain statistical applications,'' \emph{Sankhy{\=a}: The Indian Journal of
  Statistics}, pp. 177--194, 1947.

\bibitem{conway1996packing}
J.~H. Conway, R.~H. Hardin, and N.~J. Sloane, ``Packing lines, planes, etc.:
  Packings in {Grassmannian} spaces,'' \emph{Experimental mathematics}, vol.~5,
  no.~2, pp. 139--159, 1996.

\bibitem{chiu2019active}
S.-E. Chiu, N.~Ronquillo, and T.~Javidi, ``Active learning and {CSI}
  acquisition for {mmWave} initial alignment,'' \emph{IEEE Journal on Selected
  Areas in Communications}, vol.~37, no.~11, pp. 2474--2489, 2019.

\bibitem{liu2023integrated}
F.~Liu, C.~Masouros, and Y.~C. Eldar, \emph{Integrated Sensing and
  Communications}.\hskip 1em plus 0.5em minus 0.4em\relax Springer Singapore,
  2023.

\bibitem{shannon1959probability}
C.~E. Shannon, ``Probability of error for optimal codes in a {Gaussian}
  channel,'' \emph{Bell System Technical Journal}, vol.~38, no.~3, pp.
  611--656, 1959.

\bibitem{rajamaki2024effect}
R.~Rajam\"{a}ki, M.~C. H\"{u}c\"{u}meno\u{g}lu, P.~Sarangi, and P.~Pal,
  ``Effect of beampattern on matrix completion with sparse arrays,'' in
  \emph{IEEE International Conference on Acoustics, Speech and Signal
  Processing (ICASSP)}, 2024, pp. 13\,451--13\,455.

\bibitem{sidon1932satz}
S.~Sidon, ``Ein satz {\"u}ber trigonometrische polynome und seine anwendung in
  der theorie der fourier-reihen,'' \emph{Mathematische Annalen}, vol. 106,
  no.~1, pp. 536--539, 1932.

\bibitem{bose1962theorems}
R.~C. Bose and S.~Chowla, ``Theorems in the additive theory of numbers,''
  \emph{Commentarii Mathematici Helvetici}, vol.~37, pp. 141--147, 1962-63.

\bibitem{pesavento2023three}
M.~Pesavento, M.~Trinh-Hoang, and M.~Viberg, ``Three more decades in array
  signal processing research: An optimization and structure exploitation
  perspective,'' \emph{IEEE Signal Processing Magazine}, vol.~40, no.~4, pp.
  92--106, 2023.

\bibitem{dimitromanolakis2002analysis}
A.~Dimitromanolakis, ``Analysis of the {G}olomb ruler and the {S}idon set
  problems, and determination of large, near-optimal {G}olomb rulers,'' Diploma
  thesis, Technical University of Crete, 2002.

\bibitem{koochakzadeh2020compressed}
A.~Koochakzadeh and P.~Pal, ``Compressed arrays and hybrid channel sensing: A
  {Cramér-Rao} bound based analysis,'' \emph{IEEE Signal Processing Letters},
  vol.~27, pp. 1395--1399, 2020.

\bibitem{rajamaki2023importance}
R.~Rajam\"{a}ki and P.~Pal, ``Importance of array redundancy pattern in active
  sensing,'' in \emph{IEEE 9th International Workshop on Computational Advances
  in Multi-Sensor Adaptive Processing (CAMSAP)}, 2023, pp. 156--160.

\end{thebibliography}



\end{document}